\documentclass[12pt]{article}

\usepackage{url} 

\newcommand{\blind}{0}

\addtolength{\oddsidemargin}{-.5in}%
\addtolength{\evensidemargin}{-1in}%
\addtolength{\textwidth}{1in}%
\addtolength{\textheight}{1.7in}%
\addtolength{\topmargin}{-1in}%
\usepackage{multirow}
\usepackage{fullpage,setspace}
\usepackage{amssymb, amsthm, amsmath}
\usepackage{graphicx}
\usepackage[authoryear]{natbib}
\usepackage{bm}
\usepackage{lineno}
\usepackage{times}
\usepackage[ruled,vlined]{algorithm2e}
\usepackage{float,color,comment}
\usepackage[usenames,dvipsnames]{xcolor}
\usepackage{soul}
\usepackage{hyperref}

\newcommand{\btheta}{ \mbox{\boldmath $\theta$}}

\newcommand{\bbeta}{ \mbox{\boldmath $\beta$}}
\newcommand{\bphi}{\mbox{\boldmath $\phi$}}

\newcommand{\Tr}{^{\rm T}}
\newcommand{\bA}{ \mbox{\bf A}}

\newcommand{\bX}{ \mbox{\bf X}}
\newcommand{\bB}{ \mbox{\bf B}}

\newcommand{\bY}{ \mbox{\bf Y}}

\newcommand{\bs}{ \mbox{\bf s}}

\newcommand{\bQ}{ \mbox{\bf Q}}
\newcommand{\bR}{ \mbox{\bf R}}

\newcommand{\bK}{\mbox{\bf K}}

\newcommand{\iid}{\stackrel{iid}{\sim}}

\newcommand{\calN}{{\cal N}}

\newcommand{\Matern}{ \mbox{Mat$\acute{\mbox{e}}$rn}}

\newcommand{\beq}{ \begin{equation}}
\newcommand{\eeq}{ \end{equation}}
\newcommand{\beqn}{ \begin{eqnarray}}
\newcommand{\eeqn}{ \end{eqnarray}}

\newtheorem{theorem}{Theorem}

\DeclareMathOperator{\Trace}{\mathrm{Tr}}

\usepackage{caption}
\captionsetup{font={stretch=1.0}}
\setlength{\tabcolsep}{6pt}
\renewcommand{\arraystretch}{1.5}

\DeclareMathOperator{\EX}{\mathbb{E}}

\newcommand{\mbsize}{n_{\mathcal{B}}}

\linespread{1.5}

\begin{document}  

\def\spacingset#1{\renewcommand{\baselinestretch}%
{#1}\small\normalsize} \spacingset{1}


\if0\blind
{
  \title{\bf Stochastic Gradient MCMC for Massive Geostatistical Data}
  \author{Mohamed A. Abba
    \hspace{.2cm}\\
    Department of Statistics, North Carolina State University\\
    Brian J. Reich\\
    Department of Statistics, North Carolina State University \\
    Reetam Majumder \\
    Southeast Climate Adapatation Science Center, North Carolina State University \\
    Brandon Feng\\
    Department of Statistics, North Carolina State University\\
    }
  \maketitle
} \fi

\if1\blind
{
  \bigskip
  \bigskip
  \bigskip
  \begin{center}
    {\LARGE\bf Title}
\end{center}
  \medskip
} \fi

\bigskip
\begin{abstract}

Gaussian processes (GPs) are commonly used for prediction and inference for spatial data analyses. However, since estimation and prediction tasks have cubic time and quadratic memory complexity in number of locations, GPs are difficult to scale to large spatial datasets. The Vecchia approximation induces sparsity in the dependence structure and is one of several methods proposed to scale GP inference. Our work adds to the substantial research in this area by developing a stochastic gradient Markov chain Monte Carlo (SGMCMC) framework for efficient computation in GPs. At each step, the algorithm subsamples a minibatch of locations and subsequently updates process parameters through a Vecchia-approximated GP likelihood. Since the Vecchia-approximated GP has a time complexity that is linear in the number of locations, this results in scalable estimation in GPs. Through simulation studies, we demonstrate that SGMCMC is competitive with state-of-the-art scalable GP algorithms in terms of computational time and parameter estimation. An application of our method is also provided using the Argo dataset of ocean temperature measurements.
\end{abstract}
\noindent%
{\it Keywords:} Gaussian processes, SGMCMC, spatial data, Vecchia approximation, scalable inference. 
\vfill

\spacingset{1.75} 

\section{Introduction}\label{s:intro}

Gaussian process (GP) modeling is a powerful statistical and machine learning tool used to tackle a variety of  tasks including regression, classification, and optimization. Within spatial statistics, in particular, GPs have become the primary tool for inference \citep{GELFAND201686}. 
In spatial regression and classification problems, the response variable is assumed to have a spatially-correlated structure. GPs model this spatial dependence by specifying a form for the correlation between any two points in the spatial domain. In this paper we focus on the regression setting under the{\Matern} correlation with large amounts of data. Formally, GPs place a prior on the spatial process using a parameterized correlation function, which allows us to estimate \textit{a posteriori} the parameters given the observed data.

One of the main advantages of GPs is their ability to provide predictions at unobserved locations along with uncertainty quantification. Spatial interpolation, commonly known as Kriging \citep{woodard2000interpolation}, provides optimal predictions at unobserved sites based on the correlation between a given location and its observed neighbours \citep{cressie1988spatial}. However, handling large datasets with GPs poses computational challenges due to the cubic time complexity and quadratic memory requirements for the evaluation of the joint likelihood. This prohibitive computational cost mainly results from the evaluation of the covariance matrix and computing its inverse. Several methods have been proposed to address this issue and make GPs more scalable for large datasets. In this work, we combine  stochastic gradient (SG) methods along with the Vecchia \citep{vecchia1988estimation} approximation to develop an efficient algorithm for scalable Bayesian inference in massive spatial data settings. In the following section we review some of the main methods used to scale GPs \citep[see][for a full survey]{heaton2019case}, and briefly discuss applications of SG methods in correlated and dependent data settings.

\subsection{Methods to handle large spatial datasets}
The main computational bottleneck in GP regression is evaluating the inverse of the covariance matrix. To overcome this problem, a large body of literature has been proposed over the last decades, including but not limited to low rank approximations, covariance tapering, divide-and-conquer strategies, and Vecchia-type methods. Although these approaches differ significantly, they all tend to result in an amenable structure on the covariance or its inverse.

A low-rank approximation of the GP can be used to overcome the covariance inverse cost, (\textit{e.g.}, \cite{cressie2008fixed, katzfuss2011spatio, kang2009statistical}). Low-rank approximations project the spatial process on a low-dimensional space and use the low-rank representation as a surrogate to approximate the original process. \cite{banerjee2008gaussian} used predictive process methods, where first a certain number of knots are placed in the spatial domain, then used as a conditioning set for the expectation of the original process. Fixed rank kriging \citep{cressie2008fixed} approximates the original process using a small number of basis functions, which results in a precision matrix that can be obtained by inversion of a matrix with a much smaller dimension.

Instead of approximating the original process, one can impose fixed structures on the covariance or precision matrices directly. This method, also known as covariance tapering \citep{furrer2006covariance, kaufman2008covariance}, imposes a compact support on the correlation function, and hence correlation between a site and distant neighbours is shrunk to zero. This induces a sparse structure on the covariance that is leveraged to speed up the computation. Instead of imposing a structure on the covariance, \cite{rue2009approximate} directly impose a sparse structure on the precision matrix using a Gaussian Markov random field approximation to the true process.

Divide-and-conquer approaches have also been proposed to scale GPs inference. \cite{barbian2017spatial} and \cite{guhaniyogi2018meta} propose splitting the spatial domain in subsets, performing the analysis in parallel on each subset, and then combining the results. This strategy distributes the workload into smaller parts. Another option is to divide the spatial region into independent sub-regions and perform the analysis on the whole dataset under this assumption \citep{sang2011covariance}. Unlike the former, the latter uses the whole dataset but reduces the computational cost using independence between subregions.

One of the earliest and most influential methods for scalable GPs is the Vecchia approximation \citep{vecchia1988estimation, stein2004approximating}. In the Vecchia framework, the full likelihood is factorized into a series of conditional distributions. This factorization is then simplified by reducing the conditioning sets to include a small number of neighbours, which in turn results in a sparse precision matrix. \cite{guinness2018_kl} showed that  Vecchia's method is an accurate approximation to the true Gaussian process model in terms of the Kullback-Leibler divergence. This approach is also well suited for parallel computing due to the factorization of the likelihood. Recent works have built upon and extended the Vecchia approximation. \cite{katzfuss2021general} propose a generalization of the Vecchia's framework and show that many existing approaches to Gaussian process approximation can be viewed as a special case of the extended  method. \cite{datta2016a} proposed the nearest-neighbor Gaussian process as an extension of the Vecchia approximation, later, \cite{ finley2019efficient} outline an efficient Markov Chain Monte Carlo (MCMC) algorithm for scalable full Bayesian inference using this method.

In general, all the aforementioned methods reduce the computational cost from cubic to linear in the number of observations. However, in  the Bayesian framework we are mostly interested in posterior sampling through MCMC methods in order to get uncertainty estimates of the model parameters as well predictive credible intervals for certain locations. Typically, MCMC methods require thousands of iterations to accurately approximate the posterior distribution. Hence, even when the cost per iteration is linear, the total time can still be prohibitive. Recent work has therefore also focused on subsampling approaches for spatial data to reduce the computational cost associated with posterior sampling. 
 \citet{saha2023minibatch} have developed an efficient composite sampling scheme for posterior inference. Similarly, \citet{heaton2023minibatch} use minibatches to approximate the complete conditional distribution of conjugate parameters, and provide an approximate Metropolis-Hastings (MH) acceptance probabilities for non-conjugate parameters. While \citet{heaton2023minibatch} use a Vecchia approximation to define the minibatches, neither of these two works use any gradient information when drawing samples from the posterior, and are therefore fundamentally different from the gradient-based approach we will employ in our study.

\subsection{Review of stochastic gradient methods}
When dealing with large datasets, stochastic gradient (SG) methods \citep{Robbins1951sdg} have become the default choice in machine learning \citep{hardt2016train}. To avoid computing a costly gradient based on the full dataset, SG methods only require an unbiased and possibly noisy estimate using a subsample of the data. When the data is independent and identically distributed (\textit{iid}) a proper scaling of the gradient based on a given subsample of the data yields an unbiased gradient estimate. The popularity and success of SG methods in optimization eventually lead to their adoption for scalable Bayesian inference \citep{Nemeth2021sgmcmc}. Scalable SG Markov Chain Monte Carlo (SGMCMC) methods for posterior sampling in the \textit{iid} setting have been proposed \citep{Welling2011BayesianLV, Chen2015OnTC, Ma2015ACR, dubey2016variance, baker2019control}. Convergence of SGMCMC methods has also received considerable attention. Under mild conditions, SGMCMC methods produce approximate samples from the posterior \citep{teh2016consistency, durmus2017nonasymptotic, dalalyan2019user}.

Although SG methods are widely used in the \textit{iid} setting, their possible use in the correlated setting is still new. A naive application of SGMCMC methods in the correlated setting would overlook critical dependencies in the data during subsampling. Moreover, the gradient estimate from the subsamples cannot be guaranteed to be unbiased. To the best of our knowledge, subsampling methods for spatial data that result in unbiased gradient estimates has not been addressed. \cite{chen2020stochastic} studied the performance and theoretical guarantees for SG optimization for GP models. Although, the gradient based on a minibatch of the data leads to biased estimates of the full gradient of the log-likelihood, \cite{chen2020stochastic} established convergence guarantees for recovering recovering noise variance and spatial process variance in the case of the exponential covariance function. In their work, the length scale parameter, which controls the degree of correlation between distinct points is assumed known, and no convergence result is provided. Recent works have considered other types of dependent data. In the case of network data, \cite{li2016scalable} developed an SGMCMC algorithm for the mixed-member stochastic block models. \cite{ma2017stochastic} leveraged the short-term dependencies in hidden Markov models to construct an estimate of the gradient with a controlled bias using non-overlapping subsequences of the data. This approach was extended to linear and non-linear state space models \citep{aicher2019stochastic, aicher2021stochastic}.

SGMCMC methods can be divided in two main groups based on either Hamilton dynamics \citep{chen2014stochastichmc} or Langevin dynamics \citep{Welling2011BayesianLV}. In this work we use the Langevin dynamics (LD) method due to its lower number of hyperparameters, our approach can be extended to the Hamiltonian dynamics with minor modifications. We extend the SGLD method to the case of non-\textit{iid} data using the Vecchia approximation and provide a method that takes account of the local curvature to improve convergence.

In the remainder of this paper, Section \ref{s:model} discusses the{\Matern} Gaussian process model and the Vecchia approximation used to obtain unbiased gradients. Section \ref{s:comp} presents the derived SGMCMC algorithm for Gaussian process learning. We test our proposed method using a simulation study in Section \ref{s:sim}, and present a case study for ocean temperature data in Section \ref{s:argo}; Section \ref{s:discussion} concludes. A modification of our approach into a stochastic gradient Fisher scoring method for GPs is discussed in the Supplementary Material, alongside its performance for maximum likelihood estimation.

\section{$\Matern$ Gaussian Process Model and its Approximations}\label{s:model}

Let $Y_i$ for $i\in\{1,...,n\}$ be the observation at spatial location $\bs_i = (s_{i1},s_{i2})$ and $\bX_i = (X_{i1},...,X_{ip})$ be a corresponding vector of covariates. The data-generation model for Gaussian process regression in the case of Gaussian data is
 \begin{equation}\label{e:Y}
  Y_i =  \bX_i\bbeta + Z(\bs_i) + \varepsilon_i,
 \end{equation} 
with fixed effects $\bbeta$, spatial process $Z(\bs_i)$ and $\varepsilon_i\iid\mbox{Normal}(0,\tau^2)$ is measurement error  with nugget $\tau^2$. 
The process $Z(\bs)$ is an isotropic spatial Gaussian process with mean $\mbox{E}\{Z(\bs)\}=0$, spatial variance $\mbox{Var}\{Z(\bs)\}=\sigma^2$ and spatial correlation $\mbox{Cor}\{Z(\bs_i),Z(\bs_j)\}={\rm{K}} (d_{ij})$ for distance $d_{ij}=||\bs_i-\bs_j||$.  Specifically, we assume the correlation function is the $\Matern$ \citep{stein1999interpolation} correlation function with range $\rho$ and smoothness $\nu$
\begin{equation}\label{eq: matern_cov}
    {\rm{K}}(d) = \frac{1}{\Gamma(\nu)2^{\nu-1}}\left(\frac{d}{\rho} \right)^{\nu} \mathcal{K}_{\nu}\left(\frac{d}{\rho} \right),
\end{equation}
where $\mathcal{K}_\nu$ is the modified Bessel function of the second kind. Let $\btheta = (\sigma^2, \rho, \nu, \tau^2)$ be the collection of covariance parameters. 

The marginal distribution (over $Z$) of $\bY = \left\{Y(\bs_1),\ldots, Y(\bs_n)\right\}$ is multivariate normal with mean $\EX[\bY] = \bX\bbeta$, for $\bX \in \mathbf{R}^{n\times p}$ covariate matrix with the $\rm{i}^{\rm{th}}$ row $\bX_i$, and covariance matrix $\EX[(\bY - \bX\bbeta)(\bY - \bX\bbeta)\Tr \mid \btheta] = \Sigma(\btheta)$ with
\begin{align}
    \Sigma(\btheta) &= \sigma^2\bK + \tau^2\mathbf{I}_n, \\
    \bK_{i,j} &= {\rm{K}}(d_{ij}). \nonumber
\end{align}
The full log-likelihood then becomes
\begin{equation}\label{eq: gp_full_llk}
    \ell_{\rm{full}}(\bbeta, \btheta) = -\frac{n}{2}\log(2\pi) - \frac{1}{2}\log \det \Sigma(\btheta) - \frac{1}{2} (\bY - \bX\bbeta)\Tr\Sigma(\btheta)^{-1}(\bY-\bX\bbeta).
\end{equation}
Evaluating \eqref{eq: gp_full_llk} involves computing the determinant and inverse of $\Sigma(\btheta)$ which generally requires $O(n^3)$ operations. This cost becomes prohibitive for large spatial datasets. The remainder of this section discusses the computationally-efficient Vecchia approximation.

\subsection{The Vecchia approximation}\label{ss:vecchia}

For any set of spatial locations, the joint distribution of $\bY$ can be written as a product of univariate conditional distributions, which can then be approximated by a Vecchia approximation \citep{vecchia1988estimation,stein2004approximating,datta2016a,katzfuss2021general}:
\begin{equation}\label{e:vecchia}
    f(Y(\bs_1),...,Y(\bs_n)) = \prod_{i=1}^n f(Y(\bs_i)|Y(\bs_1), ...,Y(\bs_{i-1}))
    \approx
    \prod_{i=1}^n f_i(Y(\bs_i)|Y(\bs_{(i)})),
\end{equation}
for $Y(\bs_{(i)}) = \{Y(\bs_j); j\in \calN_i\}$ and conditioning set ${\cal N}_i\subseteq\{1,...,i-1\}$, e.g., the indices of the $m_i\leq m$ locations in ${\cal N}_i$ that are closest to $\bs_i$ according to some ordering of the data. Here, we use the notation that the collection of variables over the conditioning set of $Y_i$ is denoted $Y_{(i)} = \{Y_j; j\in \calN_i\}$.  Of course, not all locations that are dependent with location $i$ need be included in $\calN_i$ because distant observations may be approximately independent after conditioning on more local observations. Conditioning on only $\mathcal{N}_i$ leads to substantial computational savings when $m$ is small, \textit{i.e.}, $m<<n$. The Vecchia approximation is attractive for GPs in particular since the conditional densities are Normally distributed. \citet{Stein2002Screening} proved that a screening effect exists in this context which ensures that the Vecchia approximation is a good approximation, and \cite{Stein2011Screening} provided conditions  for a variety of situations when the screening effect might hold. Also, while we have motivated the Vecchia likelihood as an approximation, it is in fact a valid joint probability density function (PDF) which permits a standard Bayesian analysis and interpretation. 

Let $p(\bbeta,\btheta)$ be the prior distribution on the regression and covariance parameters. Using \eqref{e:vecchia} we can write the posterior as (ignoring a constant that does not depend on the parameters)
\begin{align}\label{eq: vecchia_llk_repsonse}
    \ell(\bbeta, \btheta) &= \sum_{i=1}^n \log f(Y(\bs_i) \mid Y(\bs_{(i)}), \bbeta, \btheta) \nonumber, \\
    \log p(\bbeta, \btheta \mid \bY ) & = \ell(\bbeta, \btheta) + \log p(\bbeta, \btheta).
\end{align}
Hence the log-likelihood and log-posterior of the parameters $\left\{\bbeta, \btheta\right\}$ can be written as a sum of conditional normal log-densities, where the conditioning set is at most of size $m$. The cost of computing the log-posterior in \eqref{eq: vecchia_llk_repsonse} is linear in $n$ and cubic in $m$.

Although the Vecchia approximation reduces the complexity cost from $O(n^3)$ for the full likelihood to $O(nm^3)$, this can still pose challenges for very large $n$. We can further reduce the cost of Bayesian inference by using subsampling strategies which have had substantial success in SG methods \citep{newton2018stochastic}. Although we are still in correlated data setting, sampling the summands of \eqref{eq: vecchia_llk_repsonse} with equal probability and without replacement leads to an unbiased estimate of the gradient. Let $\mathcal{B} \subset \{1, \ldots, n\}$ be a subsample, \textit{i.e.}, a minibatch index set of size $n_{\mathcal{B}}$, and let
\begin{equation}\label{eq: mini_batch_llk}
    \bar{\ell}_{\mathcal{B}}(\bbeta, \btheta) = \frac{n}{\mbsize}\sum_{i\in \mathcal{B}} \log f(Y(\bs_i) \mid Y(\bs_{(i)}), \bbeta, \btheta).
\end{equation} 
\begin{theorem}\label{th: gradient_unbiased}
    The gradient of $\bar{\ell}_{\mathcal{B}}$ is an unbiased estimator of the gradient of the Vecchia posterior $\ell(\bbeta, \btheta)$.
\end{theorem}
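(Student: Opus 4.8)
The plan is to exploit the fact that, by construction in \eqref{eq: vecchia_llk_repsonse}, the Vecchia objective $\ell(\bbeta,\btheta)$ is a \emph{finite sum} of per-location terms $\ell_i(\bbeta,\btheta) = \log f(Y(\bs_i)\mid Y(\bs_{(i)}),\bbeta,\btheta)$, so that subsampling the summands of that sum — rather than the raw data — is exactly the situation in which stochastic gradients are unbiased, just as in the \textit{iid} case. First I would rewrite the minibatch objective \eqref{eq: mini_batch_llk} as a sum over the full index set using inclusion indicators,
\[
\bar{\ell}_{\mathcal{B}}(\bbeta,\btheta) = \frac{n}{\mbsize}\sum_{i=1}^n \mathbf{1}\{i\in\mathcal{B}\}\,\ell_i(\bbeta,\btheta),
\]
where $\mathcal{B}$ is drawn uniformly at random without replacement among the $\binom{n}{\mbsize}$ subsets of $\{1,\dots,n\}$ of size $\mbsize$. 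Since this is a finite sum of terms that are differentiable in $(\bbeta,\btheta)$, the gradient passes through the sum: $\nabla\bar{\ell}_{\mathcal{B}} = \frac{n}{\mbsize}\sum_{i=1}^n \mathbf{1}\{i\in\mathcal{B}\}\,\nabla\ell_i$.

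Next I would take the expectation over the random minibatch $\mathcal{B}$ and move it inside the (finite) sum by linearity, so that the only quantity left to evaluate is the first-order inclusion probability $\Prob(i\in\mathcal{B}) = \EX[\mathbf{1}\{i\in\mathcal{B}\}]$. Under simple random sampling without replacement of size $\mbsize$ from a population of size $n$, symmetry — or a direct count of the subsets containing a fixed index — gives $\Prob(i\in\mathcal{B}) = \mbsize/n$ for every $i$. Substituting,
\[
\EX_{\mathcal{B}}\!\left[\nabla\bar{\ell}_{\mathcal{B}}(\bbeta,\btheta)\right] = \frac{n}{\mbsize}\sum_{i=1}^n \frac{\mbsize}{n}\,\nabla\ell_i(\bbeta,\btheta) = \sum_{i=1}^n \nabla\ell_i(\bbeta,\btheta) = \nabla\ell(\bbeta,\btheta),
\]
which is the claim. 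The identical argument applies coordinatewise to each partial derivative in $\bbeta$ and in $\btheta$, hence to the full gradient vector; and since the prior term $\log p(\bbeta,\btheta)$ in \eqref{eq: vecchia_llk_repsonse} is nonrandom and evaluated exactly, adding its gradient to both sides gives the corresponding statement for the log-posterior.

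I do not expect a genuine technical obstacle; the delicate point here is conceptual rather than computational. It must be emphasized that the correlation in the data does \emph{not} destroy unbiasedness, because the $Y(\bs_i)$ are not being treated as independent contributions — the dependence is entirely absorbed into the conditional densities $f(Y(\bs_i)\mid Y(\bs_{(i)}),\cdot)$, and the only randomness being averaged over is the draw of $\mathcal{B}$, with the data held fixed. The one hypothesis that should be made explicit, and which the estimator \eqref{eq: mini_batch_llk} tacitly requires, is that whenever $i\in\mathcal{B}$ the neighbor values $Y(\bs_{(i)})$ are available so that $\ell_i$ can be evaluated; this concerns which data the algorithm must access and plays no role in the expectation computation. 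A secondary routine point worth stating is the interchange of $\nabla$ and $\EX_{\mathcal{B}}$, which is immediate since $\EX_{\mathcal{B}}$ is a finite average over finitely many subsets.
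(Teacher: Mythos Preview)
Your argument is correct and follows essentially the same route as the paper's proof: rewrite the minibatch sum with inclusion indicators, use the first-order inclusion probability $\Prob(i\in\mathcal{B})=\mbsize/n$ under simple random sampling without replacement, and appeal to linearity plus the trivial interchange of $\nabla$ with a finite expectation. Your version is in fact more explicit than the paper's, which compresses the inclusion-probability step into a single line, and your closing remarks on why spatial correlation does not spoil unbiasedness and on the role of the prior term are apt clarifications.
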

\begin{proof}
\begin{align}\label{eq: unbiased_llk_gradients}
    \EX_{\mathcal{B}}[\nabla\bar{\ell}_{\mathcal{B}}(\bbeta, \btheta)] & = \nabla \EX_{\mathcal{B}} \left[ \frac{n}{\mbsize}\sum_{i=1}^n \log f(Y(\bs_i) \mid Y(\bs_{(i)}), \bbeta, \btheta)\delta_{i \in \mathcal{B}} \right] \nonumber \\
    & = \nabla  \sum_{i=1}^n \log f(Y(\bs_i) \mid Y(\bs_{(i)}), \bbeta, \btheta) \nonumber \\
    & =  \nabla\ell(\bbeta, \btheta).
\end{align}
\end{proof}
Using  \eqref{eq: unbiased_llk_gradients}, we can construct an unbiased estimate of the gradient of the Vecchia log-posterior based on a minibatch of the data:
\begin{equation}\label{eq: gradient_estimator}
    \bar{g}_{\mathcal{B}}(\bbeta, \btheta) =  \nabla\bar{\ell}_{\mathcal{B}}(\bbeta, \btheta) + \nabla\log p(\bbeta,\btheta),
\end{equation}hence reducing the cost of learning iterations to be linear in $\mbsize$ instead of $n$, \textit{i.e.}, $O(m^3\mbsize )$.

\section{The SG-MCMC Algorithm}\label{s:comp}

In this section we first review the general SG Langevin dynamics method and then present the proposed algorithm based on the Vecchia approximation.
\subsection{SG Langevin Dynamics}
SG Markov chain Monte Carlo \citep{Ma2015ACR} is a popular method for scalable Bayesian inference. SGMCMC proceeds by simulating continuous dynamics of a potential energy, namely the negative log-posterior $ -\log p(\bbeta, \btheta \mid \bY )$, such that the
dynamics generate samples from the posterior distribution. Let $\bphi = (\bbeta\Tr, \btheta\Tr)\Tr$ be the parameter vector concatenating the regression and covariance parameters of the Gaussian process regression model. The Langevin diffusion over $\log p(\bphi \mid \bY )$ is given by the stochastic differential equation (SDE)
\begin{equation}\label{eq: sgldsde}
    d(\bphi_t) = \nabla \log p(\bphi_t \mid \bY ) dt + \sqrt{2}dW_t,
\end{equation}
where $dW_t$ is Brownian motion and the index $t$ represents time. The distribution of samples $\bphi_t$ converges to the true posterior as $t \rightarrow \infty$ \citep{Roberts1998OptimalSO}. 

Since simulating a continuous time process is infeasible, in practice a discretized numerical approximation is used. Here we use the Euler discretization method. Let $h_t$ the step size at time $t$, and let $\bphi_t$ the current value of the parameter. The Euler approximation of the Langevin dynamics is given by
\begin{equation}\label{eq: euler_approx}
    \bphi_{t+1} = \bphi_t + h_t \nabla \log p(\bphi_t\mid \bY) + \sqrt{2h_t} e_t,
\end{equation}
where $e_t$ is random white noise. This recursive sampling approach is known as the Langevin Monte Carlo algorithm. Often, a Metropolis-Hastings (MH) correction step is added to account for the discretization error.

When the size of the dataset is large, computing the log-posterior gradient represents a computational bottleneck. To overcome this problem, the key idea of SGLD is to replace $ \nabla\log p(\bphi\mid \bY)$ with an unbiased gradient estimate, \textit{i.e.}, $ \bar{g}_{\mathcal{B}}(\bphi)$ in \eqref{eq: gradient_estimator} that is computationally cheaper to compute, and use a decreasing step size $h_t$ to avoid the costly M-H correction steps,
\begin{align}\label{eq: sgld_step}
    {\rm{SGLD:}}  \qquad  \bphi_{t+1} = \bphi_t + h_t  \bar{g}_{\mathcal{B}}(\bphi_t) +\sqrt{2h_t} e_t. & 
\end{align}
In order to assure convergence to the true posterior the  step sizes must satisfy
\begin{equation*}
    0 < h_{t+1} < h_t,\, \sum_{t=1}^{\infty}h_t = \infty, \mbox{ and } \sum_{t=1}^{\infty}h_t^2 < \infty.
\end{equation*}

The SGLD step in \eqref{eq: sgld_step} updates all parameters using the same step size. This can cause slow mixing when different parameters have different curvature or scales. SG Riemannian Langevin Dynamics (SGRLD) takes account of the difference in curvature and scale by using an appropriate Riemannian metric $G(\bphi)$, and simulates the diffusion by preconditioning the unbiased gradient and noise in \eqref{eq: sgld_step} using $G^{-1}(\bphi)$. SGRLD achieves better mixing by incorporating geometric information of the posterior. Commonly used metrics for $G(\bphi)$ include the Fisher information matrix and estimates of the Hessian of the log-posterior. Given a preconditioning matrix $G(\bphi)$, the SGRLD step is
\begin{equation}\label{eq: sgrld_step}
    {\rm{SGRLD}} \qquad \bphi_{t+1} = \bphi_t + h_t \left(G^{-1}(\bphi_t) \bar{g}_{\mathcal{B}}(\bphi_t) + \Gamma(\bphi_t) \right)\ + \sqrt{2h_t}G^{-1/2}(\bphi_t) e_t,
\end{equation}
where the term $\Gamma(\bphi_t)$ represents the drift term that describes how the preconditioner $G(\bphi_t)$ changes with respect to $\bphi_t$. The drift term is given by
\begin{equation}\label{eq :gamma_def}
    \Gamma(\bphi_t)_i = \sum_j \frac{\partial G(\bphi_t)_{ij}^{-1}}{\partial \bphi_{tj}}.
\end{equation}
The drift term vanishes in the SGLD step since the preconditioner is assumed to be the identity matrix.
The SGRLD algorithm in \eqref{eq: sgrld_step} takes steps in the steepest ascent on the manifold defined by the metric $G(\bphi_t)$. For many statistical models, the Fisher information matrix is intractable, however we will show in the next section that using the Vecchia's approximation we can compute the Fisher information and its inverse without incurring a high computational cost. Therefore, we use the Fisher information matrix, denoted $\mathcal{I}(\bphi)$, for $G(\bphi)$.

\subsection{Derivation of gradients and Fisher information for SGRLD}

Given an index set for a mini-batch subset of the data $\mathcal{B}$, the log-likelihood in \eqref{eq: mini_batch_llk} decomposes as the sum of log-conditional densities of the $Y(\bs_i)$ given the conditioning points $Y(\bs_{(i)})$. Computing the gradient of these conditional densities is analytically complicated and not computationally tractable. We follow \cite{guinness2019gaussian} to first rewrite the log-conditional densities in terms of marginal densities, and then compute the gradients and Fisher information. Let $u_i = Y(\bs_{(i)})$, the set of neighbours, and $v_i = (Y(\bs_{(i)}), Y(\bs_i))$, the vector of concatenating the $i^{{ \rm th}}$ observation and its neighbours. Let $\bQ_i \mbox{ and } \bR_i$ be the covariate matrices for $u_i$ and $v_i$ respectively, and let $\bA_i$ and $\bB_i$ denote the covariance matrices of $u_i$ and $v_i$. The minibatch log-likelihood in \eqref{eq: mini_batch_llk} can thus be written as
\begin{align}
    \bar{\ell}_{\mathcal{B}}(\bphi)  = & \sum_{i \in \mathcal{B}} \log f(v_i \mid \bphi) - \log f(u_i \mid \bphi) \nonumber \\
    = &-\frac{1}{2} \sum_{i \in \mathcal{B}} \log \det \bB_i - \log \det \bA_i \\
     & -\frac{1}{2} \sum_{i \in \mathcal{B}} [ (v_i - \bR_i\bbeta)^{\Tr}\bB_i^{-1}(v_i - \bR_i\bbeta) - 
    (u_i - \bQ_i\bbeta)^{\Tr}\bA_i^{-1}(u_i - \bQ_i\bbeta)] - \frac{\mbsize}{2}\log(2\pi). \nonumber
\end{align}
In order to compute the log-likelihood, we need the following quantities
\begin{align}
    p^{1}_{\mathcal{B}}(\btheta) & = \sum_{i \in \mathcal{B}} \log \det \bB_i - \log \det \bA_i \label{eq: llk_pieces-1}\\
    p^{2}_{\mathcal{B}}(\btheta) & = \sum_{i \in \mathcal{B}} (v_i\Tr\bB_i^{-1}v_i - u_i\Tr\bA_i^{-1}u_i) \label{eq: llk_pieces-2}\\
    p^{3}_{\mathcal{B}}(\btheta) & = \sum_{i \in \mathcal{B}} (\bR\Tr_i\bB_i^{-1}v_i - \bQ_i\Tr\bA_i^{-1}u_i) \label{eq: llk_pieces-3}\\
    p^{4}_{\mathcal{B}}(\btheta) & = \sum_{i \in \mathcal{B}} (\bR_i\Tr\bB_i^{-1}\bR_i - \bQ_i\Tr\bA_i^{-1}\bQ_i).  \label{eq: llk_pieces-4}
\end{align}
The quantities in \eqref{eq: llk_pieces-1} - \eqref{eq: llk_pieces-4} only depend on the covariance parameters $\btheta$ via $\bA_i \mbox{ and } \bB_i$ and not the mean parameters $\bbeta$. We can now write the minibatch log-likelihood as
\begin{equation}\label{eq: llk_form}
    \bar{\ell}_{\mathcal{B}}(\bphi) = -\frac{\mbsize}{2}\log(2\pi) - \frac{1}{2}\left[ p^{1}_{\mathcal{B}}(\btheta) + p^{2}_{\mathcal{B}}(\btheta) - 2\bbeta\Tr p^{3}_{\mathcal{B}}(\btheta) + \bbeta\Tr p^{4}_{\mathcal{B}}(\btheta)\bbeta \right].
\end{equation}

\subsubsection{Mean parameters}
The gradient of the minibatch log-likelihood with respect to the mean parameters $\bbeta$ is
\begin{equation}\label{eq: llk_beta_grad}
    \frac{\partial \bar{\ell}_{\mathcal{B}}(\bbeta, \btheta)}{\partial\bbeta} = p^{3}_{\mathcal{B}}(\btheta) - p^{4}_{\mathcal{B}}(\btheta)\bbeta.
\end{equation}
For the Fisher information, recall that if a random vector follows a multivariate normal model with mean and variance parameterized by two different parameter vectors, \textit{i.e.}, $W \sim \mathbf{N}(\mu(\bbeta), \Sigma(\btheta))$, then the Fisher information is block diagonal
$\mathcal{I}(\bphi) = \rm{diag}(\mathcal{I}(\bbeta), \mathcal{I}(\btheta)).$
Furthermore, let $J_{\bbeta}$ be the Jacobian of $\mu(\bbeta)$ with respect to $\bbeta$. Then the Fisher information matrix is analytically available \citep{MVN_fisher} and takes the form 
\begin{align}
    \mathcal{I}(\bbeta) &= J_{\bbeta}\Sigma^{-1}J_{\bbeta}\Tr \label{eq: fisher_info_mvn_beta} \\
    \mathcal{I}(\btheta)_{jk} &= \frac{1}{2} \Trace \left( \Sigma^{-1} \frac{\partial \Sigma}{\partial\btheta_j} \Sigma^{-1} \frac{\partial \Sigma}{\partial\btheta_k} \right). \label{eq: fisher_info_mvn_theta}
\end{align}

Using \eqref{eq: fisher_info_mvn_beta} and the chain rule property of the Fisher information, $\mathcal{I}_{Y(s_i)\mid u_i}(\bphi) = \mathcal{I}_{v_i}(\bphi)  - \mathcal{I}_{u_i}(\bphi) $ and summing over the components of the log-likelihood we get
\begin{equation}\label{eq: fisher_beta}
    \mathcal{I}_{\mathcal{B}}(\bbeta) = \sum_{i \in \mathcal{B}} (\bR_i\Tr\bB_i^{-1}\bR_i - \bQ_i\Tr\bA_i^{-1}\bQ_i) = p^{4}_{\mathcal{B}}(\btheta).
\end{equation}
Hence the Fisher information of $\bbeta$ is constant with respect to the mean parameters. In addition, since $\mathcal{I}(\bphi)$ is block diagonal, the drift term which represents how $\mathcal{I}(\bbeta)$ changes with respect to $\bphi$ is $\Gamma_{\mathcal{B}}(\bbeta) = {\boldsymbol{0}}_p$. The SGRLD step for regression parameters is thus
\begin{equation}\label{eq: gamma_beta}
    \bbeta_{t+1} = \bbeta_t + h_tp^{4}_{\mathcal{B}}(\btheta_t)^{-1}\left( p^{3}_{\mathcal{B}}(\btheta_t) - p^{4}_{\mathcal{B}}(\btheta_t)\bbeta_t\right) + \sqrt{2h_t}p^{4}_{\mathcal{B}}(\btheta)^{-1/2}e_t.
\end{equation}

\subsubsection{Covariance parameters}
For the covariance parameters, we first start by computing the partial derivatives of the quantities defined in \eqref{eq: llk_pieces-1}-\eqref{eq: llk_pieces-4} with respect to the components of $\btheta$, $p^{k}_j(\btheta) = \partial p^{k}_{\mathcal{B}}(\btheta)/\partial\theta_j \mbox{ for } j \in \left\{1, \ldots, 4\right\}$
\begin{align}
    p^{1}_j(\btheta) & = \sum_{i \in \mathcal{B}} \left(\Trace (\bB_i^{-1}\frac{\partial \bB_i}{\partial \theta_j}) - \Trace(\bA_i^{-1}\frac{\partial \bA_i}{\partial \theta_j}) \right)\label{eq: llk_pieces_d-1}\\
    p^{2}_j(\btheta) & = \sum_{i \in \mathcal{B}} \left(v_i\Tr \bB_i^{-1}\frac{\partial \bB_i}{\partial \theta_j}\bB_i^{-1}v_i - u_i\Tr \bA_i^{-1}\frac{\partial \bA_i}{\partial \theta_j}\bA_i^{-1}u_i\right) \label{eq: llk_pieces_d-2}\\
    p^{3}_j(\btheta) & = \sum_{i \in \mathcal{B}} \left(\bR\Tr_i\bB_i^{-1}\frac{\partial \bB_i}{\partial \theta_j}\bB_i^{-1}v_i - \bQ_i\Tr\bA_i^{-1}\frac{\partial \bA_i}{\partial \theta_j}\bA_i^{-1}u_i \right) \label{eq: llk_pieces_d-3}\\
    p^{4}_j(\btheta) & = \sum_{i \in \mathcal{B}} \left(\bR_i\Tr\bB_i^{-1}\frac{\partial \bB_i}{\partial \theta_j}\bB_i^{-1}\bR_i - \bQ_i\Tr\bA_i^{-1}\frac{\partial \bA_i}{\partial \theta_j}\bA_i^{-1}\bQ_i\right)  \label{eq: llk_pieces_d-4} \\
    \frac{\partial \bar{\ell}_{\mathcal{B}}(\bbeta, \btheta)}{\partial \theta_j} &= -\frac{1}{2} \left[ 
        p^{1}_j(\btheta) + p^{2}_j(\btheta) - 2p^{3}_j(\btheta)\bbeta + \bbeta\Tr p^{4}_j(\btheta) \bbeta
    \right]. \label{eq: llk_d_theta}
\end{align}
Using \eqref{eq: fisher_info_mvn_theta} and the chain rule decomposition of the Fisher information, we derive the analytic form of the Fisher information and drift term for the covariance parameters
\begingroup
\allowdisplaybreaks
    \begin{align}
    \mathcal{I}_{\mathcal{B}}(\btheta)_{jk} =& \frac{1}{2} \sum_{i \in \mathcal{B}} \Trace\left( \bB_i^{-1}\frac{\partial \bB_i}{\partial \theta_j}\bB_i^{-1}\frac{\partial \bB_i}{\partial \theta_k}\right) - \Trace\left(\bA_i^{-1}\frac{\partial \bA_i}{\partial \theta_j}\bA_i^{-1}\frac{\partial \bA_i}{\partial \theta_k} \right) \label{eq: fisher_theta} \\
    \frac{\partial\mathcal{I}_{\mathcal{B}}(\btheta)_{jk}}{\partial \theta_k}  = & \sum_{i \in \mathcal{B}} \Trace\left( \bB_i^{-1}\frac{\partial^2 \bB_i}{\partial \theta_j\partial \theta_k}\bB_i^{-1}\frac{\partial \bB_i}{\partial \theta_k}\right) - 
    \Trace\left( \bB_i^{-1} \frac{\partial \bB_i}{\partial \theta_j} \bB_i^{-1}\frac{\partial \bB_i}{\partial \theta_k}  \bB_i^{-1}\frac{\partial \bB_i}{\partial \theta_k} \right) \nonumber \\
    & - \sum_{i \in \mathcal{B}} \Trace\left( \bA_i^{-1}\frac{\partial^2 \bA_i}{\partial \theta_j\partial \theta_k}\bA_i^{-1}\frac{\partial \bA_i}{\partial \theta_k}\right) - 
    \Trace\left( \bA_i^{-1} \frac{\partial \bA_i}{\partial \theta_j} \bA_i^{-1}\frac{\partial \bA_i}{\partial \theta_k}  \bA_i^{-1}\frac{\partial \bA_i}{\partial \theta_k} \right) \label{eq: fisher_theta_d} \\
    \Gamma_{\mathcal{B}}(\btheta)_{j} = & -\sum_{k}  \mathcal{I}_{\mathcal{B}}(\btheta)_{j\cdot}^{-1} \frac{\partial\mathcal{I}_{\mathcal{B}}(\btheta)}{\partial \theta_k} \mathcal{I}_{\mathcal{B}}(\btheta)_{\cdot k}^{-1}.
\end{align}
\endgroup

\section{Simulation Study}\label{s:sim}
In this section, we test our proposed SGRLD method in \eqref{eq: sgrld_step} method on synthetic data and assess its performance against state-of-the-art Bayesian methods. We use Mean Squared Error (MSE) and coverage of credible intervals of posterior MCMC estimators to evaluate estimation of the spatial covariance parameters, and we use the Effective sample sizes (ESS) \citep{heidelberger1981spectral} per minute to gauge computational efficiency of MCMC algorithms. We present results only for the spatial covariance parameters $\btheta$ because the results are similar across methods for $\bbeta$.

\subsection{Data generation}\label{s:sim:gen}

We generate data on a regular rectangular grid formed with $n_1$ locations on the x-axis  and $n_2$ on the y-axis, with a total number of points $N = n_1 n_2$ and grid spacing one. We consider $N = \left\{ 10^4, 10^5, 10^6 \right\}$ for $n_1 = \{100, 300, 1000\}$ and $n_2 = N/n_1$. We generate the Gaussian process $Z(\bs)$ from a{\Matern} kernel with possible smoothness values 
$\nu \in \left\{ 0.5, 1.0, 1.5 \right\}$. The range parameter $\rho$ is chosen such that the correlation function is approximately $10^{-4}$ for the maximum distance between two points in the grid. We fix the spatial variance $\sigma^2 = 5$, and consider different scenarios for the observation 
noise based on the proportion of variance $\kappa = \tau^2/\sigma^2 \in \left\{ 0.2, 1.0, 5.0 \right\}$. Let $\bX_i = (1, x_i)$, the covariate for the $i^{{\rm th}}$ site, the mean of the Gaussian process will take the form  $\EX[Y(\bs_i)] = \beta_0 + \beta_1 \cos(x_{i})$, where $\beta_0 = -3, \mbox{ and } \beta_1 = 5$, and $x_i \iid \mbox{Uniform}(-3,3)$. 
For $N = 10^6$, generating a Gaussian process is computationally infeasible, thus we generate a Vecchia approximated Gaussian process with $m=120$ neighbors for each site. For each $N$, we generate 100 datasets and record the posterior mean and posterior credible intervals for each parameter.

\subsection{Competing methods and metrics}\label{s:sim:methods}
We compare our SGRLD method with four different MCMC methods. The first three are SG methods with adaptive drifts. The last method uses the full dataset to sample the posterior distribution using the Vecchia approximation. The three SGMCMC methods all use momentum and past gradient information to estimate the curvature and accelerate the convergence. These methods extend the momentum methods used in SG optimization methods for faster exploration of the posterior. The first method is Preconditioned SGLD (pSGLD) of \cite{li2016preconditioned} that uses the Root Mean Square Propagation (RMSPROP) \citep{hinton2012neural} algorithm to estimate a diagonal preconditioner for the minibatch gradient and injected noise. The second method is ADAMSGLD \citep{kim2022stochastic} that extends the widely used ADAM optimizer \citep{kingma2014adam} to the SGLD setting. ADAMSGLD approximates the first-order and second-order moments of the minibatch gradients to construct a preconditioner. Finally, we also include the performance of Momentum SGLD (MSGLD) where no preconditioner is used but past gradient information is used to accelerate the exploration of the posterior. The details of the above algorithms are included in the Appendix A.1. 
The final method we consider is the Nearest Neighbor Gaussian Process (NNGP) method \citep{datta2016a}. This method is the standard MCMC method based on the Vecchia approximation and is implemented in the {\rm{ R}} package {\rm{spNNGP}} \citep{spNNGP}. For this method, the initial values are set to the true values and the Metropolis-Hastings proposal distribution is chosen adaptively using the default settings.

For the SGMCMC methods, the batch size is set to $250$ when the number of location is $10^4$ and $500$ for the other two cases. We noticed during our experiments that batch sizes in the order of $200$ perform better than smaller size ones, with very similar performance to larger ones. The number of epochs will depend on the size of data, and is chosen such that the total number of iterations is $20000$, of which a quarter are discarded as burn-in.  The learning rate is divided by a factor of $2$ every $5$ epochs, so the final learning rate is set at $1\%$ of the initial value. A first tentative value of the learning rate is set at $1/N$, then reduced until the norm of the first step is less than one. We noticed that the appropriate learning rate for our SGRLD method is within one to two orders of magnitude large than the learning rate for the other SG sampling methods. For all the methods, the size of the conditioning set is fixed at $m=15$. The conditioning sets were selected using the max-min ordering \citep{katzfuss2021general} for $N<10^6$, and random ordering otherwise. \cite{katzfuss2021general} showed that the max-min ordering results in significant improvements over other coordinate based orderings. However, when $N$ is very large, the cost of max-min ordering becomes prohibitive. For the NNGP method, we take $2000$ samples when $N <10^5$ and $1000$ otherwise. For all the methods we use a non-informative flat prior on the regression parameters. For the covariance parameters, we set the following priors:
\begin{align*}
    \rho &\sim \mbox{Gamma}(9.0, 2.0) \\
    \nu & \sim \mbox{Log-Normal}(1.0, 1.0) \\
   \tau^2,\, \sigma^2 &\sim \mbox{Gamma}(0.1, 0.1) 
\end{align*}
The prior $90\%$ credible intervals for $\rho \mbox{ and } \nu$ are $(2.06,7.88)$ and $(0.52, 14.08)$ respectively, which represent weakly informative priors.

\subsection{Results}\label{s:sim:results}
Table \ref{tab: sim_all_mse} gives the MSE results. Our SGRLD method outperforms all the others with very low MSE across parameters. In particular, the SGMCMC methods all outperform the NNGP method. In our experiments, we noticed that the NNGP method suffers from very slow mixing due to  the M-H step necessary for sampling the covariance parameters. In fact, even if we start the NNGP sampling process at true values of the covariance parameters, and reduce the variance of the proposal distribution, the acceptance rate of the M-H step stays below $15\%$. None of the SGMCMC methods requires any such step as long as the learning rate is kept small. 
\begin{table}
    \centering
    \renewcommand{\arraystretch}{0.8}%
    \caption{Mean squared error (Monte Carlo standard errors) of covariance parameters computed using 100 simulations, each having sample size $N$. The proposed SGRLD method compared with other SGMCMC methods (pSGLD, ADAMSGLD, MSGLD) and the full likelihood NNGP method.}\label{tab: sim_all_mse}%
    \begin{tabular}{ c|c|c|c|c|c }
        N & Algorithm& Variance ($\sigma^2$) & Range ($\rho$) & Smoothness ($\nu$) & Nugget ($\tau^2$)  \\ 
        \hline
       \multirow{5}{*}{$10^4$} & pSGLD & $0.074 (0.013) $ & $0.039 (0.008)$  & $0.103(0.017)$ & $0.002(4\cdot10^{-4})$ \\ 
       & ADAMSGLD & $0.075 (0.017) $ & $0.036 (0.008)$  & $0.129(0.023)$ & $0.002(6\cdot10^{-4})$ \\
       & MSGLD & $0.066 (0.014) $ & $0.034 (0.008)$  & $0.108(0.0196)$ & $0.002(6\cdot10^{-4})$ \\
       & NNGP & $0.414 (0.131) $ & $0.095 (0.071)$  & $0.162(0.106)$ & $0.093(2.4\cdot10^{-2})$ \\
       & SGRLD & $0.056 (0.016) $ & $0.031 (0.006)$  & $0.077(0.013)$ & $0.001(10^{-4})$ \\ \hline
       \multirow{5}{*}{$10^5$} & pSGLD & $0.008 (0.001) $ & $0.002 (0.0003)$  & $0.011(0.0019)$ & $1\cdot10^{-4}(2\cdot10^{-5})$ \\ 
       & ADAMSGLD & $0.014 (0.005) $ & $0.008 (0.002)$  & $0.031(0.008)$ & $1\cdot10^{-4}(2\cdot10^{-4})$ \\
       & MSGLD & $0.017 (0.001) $ & $0.003 (5\cdot10^{-4)}$  & $0.019(0.002)$ & $2\cdot10^{-4}(4\cdot10^{-5})$ \\
       & NNGP & $0.116 (0.030) $ & $0.024 (0.01)$  & $0.118(0.08)$ & $4\cdot 10^{-2}(0.01)$ \\
       & SGRLD & $0.005 (8\cdot 10^{-4}) $ & $0.001 (1.0\cdot 10^{-4})$  & $0.008(1.8 \cdot 10^{-3})$ & $10^{-4}(2\cdot10^-5)$ \\\hline
       \multirow{5}{*}{$10^6$} & pSGLD & $0.003 (0.001) $ & $0.003 (0.0008)$  & $0.002(0.0014)$ & $3.1\cdot10^{-4}(6\cdot10^{-5})$ \\ 
       & ADAMSGLD & $0.009 (0.002) $ & $0.006 (0.002)$  & $0.026(0.007)$ & $2\cdot10^{-4}(9\cdot10^{-5})$ \\
       & MSGLD & $0.011 (1.8\cdot10^{-3}) $ & $0.003 (5\cdot10^{-4)}$  & $0.019(0.002)$ & $1\cdot10^{-5}(3\cdot10^{-5})$ \\
       & NNGP & $0.078 (0.055) $ & $0.016 (0.009)$  & $0.126(0.086)$ & $0.08(0.049)$ \\
       & SGRLD & $0.002 (3\cdot 10^{-4}) $ & $0.001 (1\cdot 10^{-4})$  & $0.004(6.1 \cdot 10^{-3})$ & $0.4\cdot 10^{-4}(1\cdot10^-5)$ \\
    \end{tabular}
\end{table}

Table \ref{tab: sim_all_coverage} summarizes the results for the coverage of the $95\%$ credible intervals. Our SGRLD method again outperforms the other methods. One exception is that the pSGLD algorithm surpasses the SGRLD in the coverage of the variance parameter. Across methods, the smoothness parameter consistently has the lowest coverage, followed by the range parameter. Even for $N=10^6$, MSGLD, ADAMSGLD and NNGP fail to attain attain a $90\%$ coverage rate. Whilst the SGRLD coverage rate for both parameters is higher than $90\%$ even for $N=10^4$.

\begin{table}
\caption{Coverage of the $95\%$ credible intervals (Monte Carlo standard errors) for the covariance parameters computed using 100 simulations, each having sample size $N$. The proposed SGRLD method is compared with other SGMCMC methods (pSGLD, ADAMSGLD, MSGLD) and the full likelihood NNGP method.}\label{tab: sim_all_coverage}
\centering
\renewcommand{\arraystretch}{0.8}%
\begin{tabular}{c|c|c|c|c|c}
N &Algorithm & Variance, $\sigma^2$ & Range, $\rho$ & Smoothness, $\nu$ & Nugget, $\tau^2$ \\
\hline
\multirow{5}{*}{$10^4$} & pSGLD & $0.977(0.02)$ & $0.845(0.06)$ & $0.815(0.06)$ & $0.931(0.05)$\\
& ADAMSGLD & $0.886 (0.05)$ & $0.791(0.08)$ & $0.647(0.08)$ & $0.636(0.05)$\\
& MSGLD & $0.793(0.03)$ & $0.847 (0.07)$ & $0.709(0.07)$ & $0.683(0.05)$\\
& NNGP & $0.783 (0.06)$ & $0.776(0.05)$ & $0.614(0.07)$ & $0.812(0.01)$ \\
& SGRLD & $0.955(0.03)$ & $0.924(0.05)$ & $0.909(0.04)$ & $0.935(0.01)$\\ \hline
\multirow{5}{*}{$10^5$} & pSGLD & $0.991(0.03)$ & $0.913(0.04)$ & $0.862(0.05)$ &$0.965(0.02)$\\
& ADAMSGLD & $0.861(0.03)$ & $0.754(0.07)$ & $0.814(0.03)$ & $0.738(0.05)$\\
& MSGLD & $0.896(0.04)$ & $0.881(0.07)$ & $0.774(0.08)$ & $0.872(0.07)$\\
& NNGP & $0.826(0.05)$ & $0.758(0.04)$ & $0.714(0.03)$ & $0.872(0.02)$\\
& SGRLD & $0.957(0.01)$ & $0.964(0.01)$ & $0.948(0.01)$ & $0.932(5\cdot10^{-3})$\\ \hline
\multirow{5}{*}{$10^6$} & pSGLD & $0.987(6\cdot10^{-3})$ & $0.934(0.02)$ & $0.901(0.03)$ & $0.961(0.01)$\\
& ADAMSGLD & $0.902(0.01)$ & $0.824(10^{-3})$ & $0.838(0.02)$ & $0.781(0.03)$\\
& MSGLD & $0.884(10^{-3})$ & $0.918(0.02)$ & $0.846(0.01)$ & $0.926(0.01)$\\
& NNGP & $0.866(0.03)$ & $0.818(0.06)$ & $0.834(0.04)$ & $0.862(0.01)$\\
& SGRLD & $0.968(6\cdot10^{-3})$ & $0.941(8\cdot10^{-3})$ & $0.929(5\cdot10^{-3})$ & $0.941(2\cdot10^{-3})$\\
\end{tabular}
\end{table}

For the ESS results in Table \ref{tab: sim_all_ess},  the SGRLD method offers superior effective samples per unit time for all the parameters. The pSGLD and MSGLD method seem to adapt to the curvature of the variance parameter, with pSGLD offering higher effective samples than SGRLD. This suggests that the computed preconditioner in pSGLD adapts mainly to the curvature of the variance term, but fails to measure the curvature of the smoothness and range. A similar behavior is also observed in the other two methods, MSGLD and ADAMSGLD. On the other hand, the ESS for SGRLD is of the same order for all the parameters. We believe this indicates that using the Fisher information matrix as a Riemannian metric provides an accurate measure of the curvature and results in higher effective samples for all the parameters. The NNGP method provides low effective sample sizes compared to the other three methods due to the low acceptance rate from the MH correction step.

\begin{table}
    \centering
    \renewcommand{\arraystretch}{0.8}%
    \caption{Effective sample size per minute (Monte Carlo standard errors) of covariance parameters computed using 100 simulations, each having sample size $N$. The proposed SGRLD method is compared with other SGMCMC methods (pSGLD, ADAMSGLD, MSGLD) and the full likelihood NNGP method.
    \label{tab: sim_all_ess}}%
    \begin{tabular}{ c|c|c|c|c|c }
        N &Algorithm & Variance, $\sigma^2$ & Range, $\rho$ & Smoothness, $\nu$ & Nugget, $\tau^2$   \\ 
        \hline
        \multirow{5}{*}{$10^4$} & pSGLD & $42.97 (1.57) $ & $8.43 (0.54)$  & $4.33(0.26)$ & $9.82(0.79)$ \\ 
        & ADAMSGLD & $9.12 (0.45) $ & $4.22 (0.33)$  & $2.85(0.28)$ & $3.80(0.48)$ \\
        & MSGLD & $15.68 (0.95) $ & $6.48 (0.70)$  & $3.65(0.44)$ & $5.11(0.78)$ \\
        & NNGP & $1.02 (0.33) $ & $0.99 (0.24)$  & $1.11(0.75)$ & $0.51(0.14)$ \\
        & SGRLD & $23.8(1.15) $ & $23.9 (1.19)$  & $25.2(1.25)$ & $30.5(1.55)$ \\ \hline
        \multirow{5}{*}{$10^5$} & pSGLD & $66.87 (2.09) $ & $10.06 (0.65)$  & $3.59(0.21)$ & $11.3(0.79)$ \\ 
        & ADAMSGLD & $7.87 (0.38) $ & $2.37 (0.27)$  & $1.15(0.13)$ & $1.64(0.24)$ \\
        & MSGLD & $12.92 (0.67) $ & $3.15 (0.36)$  & $1.206(0.11)$ & $1.71(0.13)$ \\
        & NNGP & $0.89 (0.08) $ & $0.75 (0.31)$  & $1.02(0.14)$ & $0.47(0.07)$ \\
        & SGRLD & $22.7(0.33) $ & $22.44 (0.27)$  & $22.69(0.13)$ & $23.23(0.34)$ \\ \hline
        \multirow{5}{*}{$10^6$} & pSGLD & $96.49 (3.37) $ & $13.68 (0.81)$  & $3.04(0.11)$ & $9.74(0.42)$ \\ 
        & ADAMSGLD & $6.17 (0.13) $ & $4.56 (0.52)$  & $1.98(0.62)$ & $2.36(0.83)$ \\
        & MSGLD & $15.07 (1.01) $ & $3.78 (0.81)$  & $2.06(0.30)$ & $5.01(0.97)$ \\
        & NNGP & $0.81 (0.16) $ & $1.01 (0.34)$  & $0.28(0.05)$ & $0.52(0.03)$ \\
        & SGRLD & $25.8(0.14) $ & $26.05 (0.18)$  & $29.62(0.28)$ & $24.07(0.27)$ \\
    \end{tabular}
\end{table}

Given the performance of the SG based methods in this simulation study, especially the SGRLD, we conducted an additional simulation study where we focus on point estimates instead of fully Bayesian inference. In Appendix A.2, we tweak the SGRLD method and turn it into a SG Fisher scoring (SGFS) algorithm for point estimates. We compare this method to the full data gradient Fisher scoring method \citep{guinness2019gaussian} already implemented in the {\rm{GpGp} R} package \citep{guinness2018gpgp}. We find improved speed and estimation precision compared to the GpGp package.

\section{Analysis of Global Ocean Temperature Data}\label{s:argo}

We apply the proposed method to the ocean temperature data provided by the Argo Program \citep{argo} made available through the {\tt GpGp} package \citep{guinness2018gpgp}.  Each of the $n=32,436$ observations are taken on buoys in the Spring of 2016.  Each observation measures of ocean temperature (C) at depths of roughly 100, 150 and 200 meters. The data are plotted in Figure \ref{f:argo} for depth 100 meters; we analyze these data using the methods evaluated in Section \ref{s:sim}. As an illustrative example, the mean function is taken to be quadratic in latitude and longitude and the covariance function is the isotropic $\Matern$ covariance function used in Section \ref{s:sim}. All prior distributions and MCMC settings are the same as in Section \ref{s:sim}.

\begin{figure}
    \centering
    \includegraphics[scale=1,trim={0 5cm 0 5cm},clip,page=2]{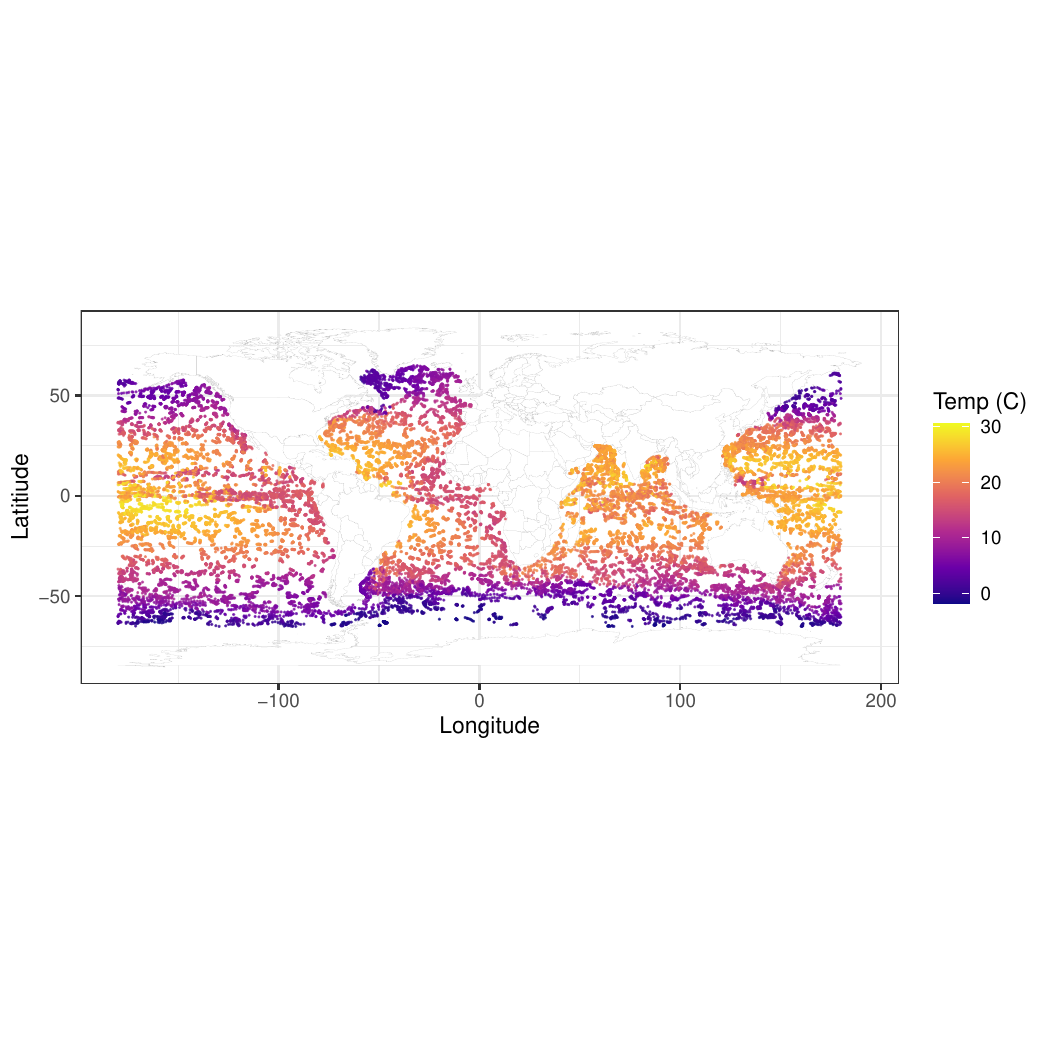}
    \caption{Argo ocean temperature measurements at a depth of 100 meters.}
    \label{f:argo}
\end{figure}

We first split the data into a test and training set, keeping $20\%$ of the observations in the testing set.  We train the models using $8000$ and $40000$ MCMC iterations for the NNGP ad SGRLD method respectively. For the SGRLD method this requires only $400$ epochs. We compare our SGRLD with the NNGP method using prediction MSE, squared correlation between predicted and observed ($R^2$) and coverage of 95\% prediction intervals on the test set. We also include the effective sample size per minute for all the model parameters.

\begin{table}
    \centering
    \begin{tabular}{c|cccc}
    \multicolumn{1}{c}{} & MSE & Coverage & $R^2$ & Time (in minutes) \\
    \hline
    NNGP  & $6.41$ & $0.88$ & $0.89$ &  $218.55$\\
    SGRLD & $1.47$ & $0.93$ & $0.94$ &  $7.01$   
    \end{tabular}
    \caption{Prediction Mean Squared Error (MSE), squared correlation between predicted and observed ($R^2$) and coverage rate of the $95\%$ predictive credible intervals on the test set and the correlation between the predicted temperatures and true observed values. The last column gives the total training time in minutes. We take $8000$ and $40000$ samples using the NNGP and SGRLD method respectively. }
    \label{tab: argo2016_test_results}
\end{table}

Table \ref{tab: argo2016_test_results} gives the MSE and coverage rate  on the testing set, and total training time respectively. Our method achieves less than the quarter of the MSE of NNGP while also requiring less than a twentieth of the time. For the coverage of the  $95\%$ prediction intervals, the NNGP method's average coverage on the testing set is significantly lower than the nominal value, while our proposed method achieves $93\%$ coverage.

\begin{table}
    \centering
    \renewcommand{\arraystretch}{0.8}
    \begin{tabular}{c|cccc}
    \multicolumn{1}{c}{Method}& Parameter & Posterior mean & $95\%$ CI & ESS/min \\
    \hline
    \multirow{4}{*}{NNGP} &$\sigma^2$& $6.72$ & $(6.32, 7.08)$ & $0.17$ \\
                          &  $\rho$  & $0.10$ & $(0.10, 0.11)$ & $3.42$ \\
                          &   $\nu$  & $0.33$ & $(0.32, 0.34)$ & $0.04$ \\
                          & $\tau^2$ & $0.08$ & $(0.08, 0.09)$ & $0.08$ \\
    \hline
    \multirow{4}{*}{SGRLD} &$\sigma^2$& $10.64$ & $(7.41, 13.57)$ & $52.21$ \\
                          &  $\rho$  & $48.93$  & $(22.94, 68.46) $ & $115.41$ \\
                          &   $\nu$  & $0.25$   & $(0.23, 0.27)$ & $18.68$ \\
                          & $\tau^2$ & $0.04$   & $(0.03, 0.05)$ & $39.13$ \\
    \end{tabular}
    \caption{Posterior mean, $95\%$ credible intervals and effective sample size per minute for all the covariance parameters.}
    \label{tab: argo2016_theta_posterior}
\end{table}

\begin{figure}
    \centering
    \includegraphics[width = .9\textwidth, ]{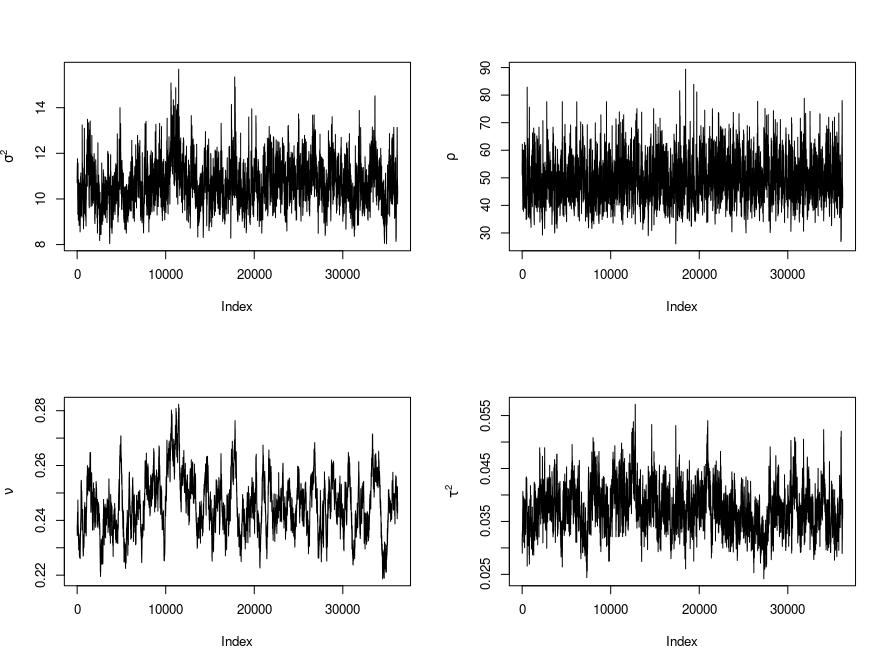}
    \caption{Evolution of SGRLD sampling from the posterior distribution of the covariance parameters.}
    \label{fig:sgrld_mcmc}
\end{figure}

Table \ref{tab: argo2016_theta_posterior} gives the posterior mean, 95\% interval and effective sample size per minute for the covariance parameters for SGRLD and NNGP. The posterior means and credible intervals for the range parameter $\rho$, and to a lesser extent the spatial variance $\sigma^2$, vary substantially between the two methods. The range estimates from SGRLD are almost three orders of magnitude higher than the NNGP estimate. Given the prediction results in Table \ref{tab: argo2016_test_results}, this indicates that the NNGP method is underestimating the range parameter. Furthermore, for NNGP, the credible interval for the range parameter has a total width of $10^{-2}$, perhaps indicating poor convergence. 
We also see from Table \ref{tab: argo2016_theta_posterior} that our SGRLD method allows fast exploration of the posterior and leads to almost $500$ times more effective samples per unit time, while giving reasonable convergence (Figure \ref{fig:sgrld_mcmc}). 

\begin{table}
    \centering
    \renewcommand{\arraystretch}{1.2}
    \begin{tabular}{cc|cccc}
    $\mbsize$ & $m$        & $\sigma^2$ & $\rho$ & $\nu$ & $\tau^2$ \\
    \hline
    \multirow{3}{*}{$100$} & $10$ & $10.18_{(8.79, 11.89)} $& $53.67_{(41.11, 66.87)}$ & 
                           $0.24_{(0.22, 0.25)}$ & $0.04_{(0.03, 0.04)}$  \\
                           & $15$ & $11.29_{(9.08, 13.53)}$ & $54.95_{(39.18, 72.83)}$ & 
                           $ 0.25_{(0.22, 0.27)} $ & $0.04_{(0.03, 0.04)}$ \\ 
                           & $30$ & $ 9.60_{(5.09, 13.24)}$ & $46.41_{(13.34, 74.52)}$ & $0.24_{(0.20, 0.26)}$ 
                           & $0.04_{(0.03, 0.07)}$  \\ \hline
    \multirow{3}{*}{$250$} & $10$ & $10.52_{(9.08, 12.25)}$ & $49.55_{(37.79, 62.85)}$ &
                           $0.25_{(0.22, 0.26)}$ & $0.04_{(0.03, 0.05)}$ \\ 
                           & $15$ & $10.64_{(7.41, 13.57)}$ & $48.93_{(22.94, 68.46)}$ & $0.25_{(0.23, 0.27)}$ 
                           & $0.04_{(0.03, 0.05)}$  \\
                           & $30$ & $10.59_{(7.41, 13.57)}$ & $46.08_{(22.95, 69.46)}$ & 
                           $0.25_{(0.23, 0.27)}$ & $0.04_{(0.03, 0.05)}$  \\ \hline
    \multirow{3}{*}{$500$} & $10$ & $11.02_{(9.74, 12.69)}$ & $49.23_{(39.56, 60.36)}$ & 
                           $0.25_{(0.23,0.27)}$ & $0.04_{(0.03, 0.04)}$  \\
                           & $15$ & $11.41_{(9.75, 13.20)}$ & $48.42_{(37.98, 60.26)}$ & $0.25_{(0.24, 0.27)}$ & $0.04_{(0.03, 0.04)}$  \\ 
                           & $30$ & $11.52_{(9.72, 13.61)}$ & $48.39_{(36.61, 62.35)}$ & $0.26_{(0.24, 0.28)}$ &$0.04_{(0.03, 0.04)}$
    \end{tabular}
    \caption{Sensitivity analysis to the choice of the conditioning set size $m$ and the mini-batch size $\mbsize$. Posterior mean and $95\%$ credible intervals are displayed for each combination of $\mbsize \mbox{ and } m$.}
    \label{tab: batch_m_sensitivity}
\end{table}

Finally, as a sensitivity analysis, we compare the SGRLD results with mini-batch size $\mbsize \in \{100, 250, 500\}$ and conditioning set size $m \in \{10,15,30\}$.  Table \ref{tab: batch_m_sensitivity} show the posterior mean and $95\%$ credible intervals of the covariance parameters for all combinations of the two hyperparameters. The posterior mean of the spatial variance, smoothness and nugget vary little across these combinations of tuning parameters. For the range parameter, we notice a sensitivity to small batch sizes, \textit{e.g.}, $\mbsize=100$ resulting in wide credible intervals and larger estimates compared to the other cases. For batch sizes $\{250, 500\}$ the estimates are similar across values of $m$.   

\section{Discussion}\label{s:discussion}
SG methods offer considerable speed-ups when the data size is very large. In fact, one can take hundreds or even thousands of steps in one pass through the whole dataset in the time it takes for only one step if the full dataset is used. This enables fast exploration of the posterior in significantly less time. GPs however fall within the correlated setting case where SGMCMC methods have received limited attention. Spatial correlation is a critical component of GPs and naive subsampling during parameter estimation would lead to random divisions of the spatial domain at each iteration. By leveraging the form of the Vecchia approximation, we derive unbiased gradient estimates based on minibatches of the data. We developed a new stochastic gradient based MCMC algorithm for scalable Bayesian inference in large spatial data settings.  Without the Vecchia approximation, subsampling strategies would always lead to biased gradient estimates. The proposed method also uses the exact Fisher information to speed up convergence and explore the parameter space efficiently. Our work contributes to the literature on scalable methods for Gaussian process, and can be extended to non Gaussian models \textit{i.e.} classification.

\section*{Acknowledgements}
This research was partially supported by National Science Foundation grants DMS2152887 and DMR-2022254, and by grants from the Southeast National Synthesis Wildfire and the United States Geological Survey’s National Climate Adaptation Science Center (G21AC10045).
\clearpage
\begin{singlespace}
	\bibliographystyle{rss}
	\bibliography{refs}
\end{singlespace}

\section*{Appendix A.1: Computational Details}\label{s:A1}
Here we give the detailed algorithms of the SG methods with adaptive drifts.
The RMSprop (Root Mean Square Propagation) algorithm is an optimization algorithm originally developped for training neural networks models. It adapts the learning rates of each parameter based on the historical gradient information. This can be seen as adaptive preconditioning method.

\begin{algorithm}[H]
\SetAlgoNlRelativeSize{0}
\SetAlgoNlRelativeSize{-1}
\KwIn{Initial parameter values $\theta_0$, learning rate $h_0$, decay rate $\rho$, small constant $\epsilon$}
\KwOut{Optimized parameter values $\theta$}
Initialize square gradient accumulator $r_0 = 0$\;
\While{not converged}{
    Sample minibatch without repetition;
    Compute gradient $\bar{g}$ on mini-batch\;
    Accumulate squared gradient: $r_t \leftarrow \rho r_{t-1} + (1 - \rho) \bar{g}\odot \bar{g} $\;
    Update parameters: $\theta_{t+1} \leftarrow \theta_{t} - h_t \bar{g} \oslash \sqrt{r_t + \epsilon} $\;
}
\caption{RMSprop Algorithm \label{alg: rmsprop}}
\end{algorithm}

\vspace{20pt}
Momentum SGD is an optimization algorithm that uses a Neseterov momentum term to accelerate the convergence in the presence of high curvature or noisy gradients. Momentum SGD proceeds as follows

\begin{algorithm}[H]
\SetAlgoNlRelativeSize{0}
\SetAlgoNlRelativeSize{-1}
\KwIn{Initial parameter values $\theta_0$, learning rate $h_0$, momentum term $\alpha$}
\KwOut{Optimized parameter values $\theta$}
Initialize velocity $v_0 = 0$\;
\While{not converged}{
Sample minibatch without repetition;
    Compute gradient $\bar{g}_t$ on mini-batch\;
    Update velocity: $v_t \leftarrow \alpha v_{t-1} - h_t \bar{g}$\;
    Update parameters: $\theta_{t+1} \leftarrow \theta_t + v_t$\;
}
\caption{Momentum SGD Algorithm\label{alg: msgd}}
\end{algorithm}

\vspace{25pt}
The Adam algorithm combines ideas from RMSprop and momentum to adaptively adjust learning rates.

\begin{algorithm}[H]
\SetAlgoNlRelativeSize{0}
\SetAlgoNlRelativeSize{-1}
\KwIn{Initial parameter values $\theta_0$, learning rate $h_0$, exponential decay rates for moments $\alpha_1$, $\alpha_2$, small constant $\epsilon$}
\KwOut{Optimized parameter values $\theta$}
Initialize moment estimates $m_0 = 0$, $v_0 = 0$, time step $t = 0$\;
\While{not converged}{
    Sample minibatch without repetition;
    Compute gradient $\bar{g}$ on mini-batch\;
    Update biased first moment estimate: $m_{t+1} \leftarrow \alpha_1 m_t + (1 - \alpha_1) \bar{g}$\;
    Update biased second raw moment estimate: $v_{t+1} \leftarrow \beta_2 v + (1 - \alpha_2) \bar{g}
    \odot \bar{g}$\;
    Correct bias in moment estimates: $\hat{m}_{t} \leftarrow m_{t}/(1 - \alpha_1^t)$, $\hat{v}_{t} \leftarrow v_t/(1 - \alpha_2^t)$\;
    Update parameters: $\theta_{t+1} \leftarrow \theta_t - \alpha \hat{m}_t\oslash (\sqrt{\hat{v}_t} + \epsilon)$\;
}
\caption{Adam Algorithm\label{alg: adam}}
\end{algorithm}

\section*{Appendix A.2: Additional Results}\label{s:A2}

\subsection*{Maximum likelihood estimates}\label{A:sim:results}

As discussed in the simulation study section, a small modification to the SGRLD algorithm yields a stochastic Fisher scoring method for the likelihood using the Vecchia approximation. To perform Fisher scoring, we only need to remove the injected noise and drift terms from the updates in equations \eqref{eq: sgrld_step}. Let $\bbeta_t, \btheta_t \mbox{ and } h_t$, the current values of the parameters and the step size respectively. The SGFS scoring updates are
\begin{align}\label{eq: sgfs}
    \bbeta_{t+1} &= \bbeta_t + h_tp^{4}_{\mathcal{B}}(\btheta_t)^{-1}\left( p^{3}_{\mathcal{B}}(\btheta_t) - p^{4}_{\mathcal{B}}(\btheta_t)\bbeta_t\right) \\
    \btheta_{t+1} &= \btheta_t + h_t \mathcal{I}_{\mathcal{B}}(\btheta)^{-1}\nabla_{\btheta_t}\bar{\ell}_{\mathcal{B}}(\bbeta_t, \btheta_t),
\end{align}
where $\mathcal{I}_{\mathcal{B}}(\btheta)$ is given in \eqref{eq: fisher_theta} and $\nabla_{\btheta_t}\bar{\ell}_{\mathcal{B}}(\bbeta_t, \btheta_t)$ is the vector with elements defined in \eqref{eq: llk_d_theta}.

We compare the SGFS method to the full data Fisher scoring method in \cite{guinness2019gaussian} and the most widely used SGD variants. We limit the setting to $N=10^4$ locations, the learning rate scheduling and batch size dimensions are kept the same, and the number of epochs is set to $10$. To avoid overfitting in the SGD methods we use the stopping rule proposed by \cite{chee2018convergence}. This method keeps a running average of the inner product of successive gradients and detects when this quantity changes sign. The results are summarized in Table-\ref{tab: sim_n_1e4gpgp}.
\begin{table*}[!htbp]
    \centering
    \caption{Mean squared error of covariance parameters from $10^4$ locations. Mean and standard deviation (in parenthesis) displayed over 100 simulations.
    \label{tab: sim_n_1e4gpgp}}%
    \begin{tabular}{ c|c|c|c|c|c }
        \hline
        Algorithm& Variance & Range & Smoothness  & Nugget & Time (in seconds) \\ 
        \hline
        RMSPROP & $0.12 (0.03) $ & $0.39 (0.069)$  & $0.72(0.12)$ & $0.02(3\cdot10^{-4})$ & $9.26$ \\ 
        ADAM & $0.52 (0.07) $ & $0.69 (0.02)$  & $0.93(0.06)$ & $0.11(3\cdot10^{-3})$ & $11.71$\\
        MSGD & $0.19 (0.06) $ & $0.48 (0.10)$  & $1.08(0.09)$ & $0.17(8\cdot10^{-3})$ & $10.05$\\
        GpGp & $0.92 (0.16) $ & $0.09 (0.02)$  & $1.04(0.25)$ & $0.21(6\cdot10^{-3})$ & $38.72$\\
        SGFS & $0.06 (0.014) $ & $0.10 (0.04)$  & $0.16(0.01)$ & $0.11(2\cdot10^{-2})$ & $14.03$\\
    \end{tabular}
\end{table*}

The results in Table-\ref{tab: sim_n_1e4gpgp} show that the SGFS outperforms the other methods in terms of estimation error. Compared to GpGp, the stochastic methods take at most half the time while performing twenty times more iterations.
%

\end{document}